\documentclass{article}
\usepackage{amsmath}
\usepackage{amsfonts}
\usepackage{amssymb}
\usepackage{latexsym}
\usepackage{amsthm}

\newtheorem{theorem}{Theorem}
\newtheorem{proposition}[theorem]{Proposition}

\newtheorem{definition}[theorem]{Definition}

\newcommand{\A} {\mathbf{A}}

\providecommand{\keywords}[1]
{
	{\noindent
	\textbf{\textit{Keywords:}} #1}
}

\title{On the Advice Complexity of Online Unit Clustering}
\author{Judit Nagy-Gy\"orgy \thanks {Bolyai Institute, University of
Szeged, Aradi V\'ertan\'uk tere 1, H-6720 Szeged, Hungary, email:
Nagy-Gyorgy@math.u-szeged.hu}}
\date{}

\begin{document}
\frenchspacing
\maketitle

\begin{abstract}
	In online unit clustering, points of a metric space arriving one by one must be partitioned into clusters of diameter at most 1, where the cost is the number of clusters.	
	This paper gives linear upper and lower bounds on the advice complexity of 1-competitive online unit clustering algorithms, in terms of the number of points in $\mathbb{R}^d$ and $\mathbb{Z}^d$. 
\end{abstract}

\vspace{10pt}
\keywords{online algorithms; advice complexity; clustering; covering}
\vspace{15pt}

\section{Introduction}

Clustering problems are fundamental and arise in a wide variety of applications. In clustering problems, the goal is to find a partition of a set of points in a metric space for which an objective function is optimized. The objective functions can be various. The elements of such partitions are called clusters. 

In this paper, we consider problems where each cluster must be coverable by a unit ball, trying to minimize the number of subsets used. 
In the online unit clustering model, points are presented to the algorithm one by one and must be assigned to clusters upon arrival, and this assignment cannot be changed later. This problem was  introduced by Chan and Zarrabi-Zadeh in \cite{CZZ}. 
Online unit covering, previously studied by Charikar et al. in \cite{CCFM},
differs in that the exact location of the ball covering a cluster must be fixed when the cluster is opened. Unit covering is the offline version of both problems.
The space is $\mathbb{R}^d$, $d\in\mathbb{N}$ with norm $L_\infty$, so balls are intervals, squares, or cubes. 

We evaluate the efficiency of online algorithms in terms of the competitive ratio (see \cite{BOR}), where the online algorithm is compared to the optimal offline algorithm. An online algorithm is $c$-competitive if its cost is at most $c$ times the optimal cost. 
For randomized algorithms, the expected value of the cost is compared to the optimum.

For online unit covering in $\mathbb{R}^d$, Charikar et al. gave a deterministic algorithm of competitive ratio $O(2^d d \log d)$ and tight bounds of 2 for $d=1$ and 4 for $d = 2$, later Dumitrescu and T\'oth in \cite{DT} gave a lower bound of $d+1$ on the competitive ratio of any deterministic online algorithm. The authors of \cite{DT} also designed an $O(d^2)$-competitive randomized online algorithm for online unit covering in $\mathbb{Z}^d$.

The competitive ratio of deterministic and random algorithms for online unit clustering has been investigated in several papers (see \cite{CZZ, CCFM, CsEIL, DI, DT, EL, ELS, ES, KK, ZZC}).
Table~\ref{table} contains a summary of the best known bounds on competitive ratios in $\mathbb{R}^d$. 

\begin{table}[h!]
	\centering
\begin{tabular}{l | l l l}
	  & $d=1$ & $d=2$ & $d>2$\\
	 \hline
	deterministic upper bounds& 5/3\ \ \cite{EL} & 10/3\ \ \cite{EL} & $2^d\cdot 5/6$ \cite{EL} \\
	deterministic lower bounds& 13/8\ \ \cite{KK}\ \ \ & 13/6\ \ \cite{EL}\ \ \ & $\Omega(d)$ \cite{DT}\\
	randomized lower bounds& 3/2\ \ \cite{ES} & 11/6\ \ \cite{ES} & $\Omega(d)$ \cite{DT}\\
	\hline
\end{tabular}
\caption{A summary of the best known bounds on the competitive ratio for unit clustering in $\mathbb{R}^d$}
\label{table}
\end{table}

\noindent It should be mentioned that Dumitrescu and T\'oth showed in \cite{DT} that the competitive ratio of the greedy algorithm is at most $2^{d-1}+\frac{1}{2}$ for online unit clustering in $\mathbb{Z}^d$.

The notion of advice complexity for online algorithms was introduced by the authors of \cite{DKP}. The main question is: How many bits of advice are necessary and sufficient to achieve a competitive ratio of $c$? This involves determining the number of bits for optimality. The results of the following two sections are obtained in the \emph{tape model} introduced in \cite{BKKKM}. In this model, the online algorithm can read an infinite advice tape written by the oracle, and the advice complexity is simply the number of bits read.
For more information on advice complexity, see the survey paper \cite{BFKLM}.
Mikkelsen showed in \cite{Mik} that an algorithm with $o(n)$ bits of advice for unit clustering in $\mathbb{R}$ must be at least 3/2-competitive.

As Epstein and van Stee noted in \cite{ES}, online clustering is an online graph coloring problem. If we think of the clusters as colors and the points as vertices, then an edge occurs between two points if they are too far apart to be colored with the same color. This implies that upper bounds on the advice complexity of online graph coloring hold for online unit clustering as well.

In section~\ref{lower}, lower bounds $d(n-2^{d+1}+1)$ and $\frac{d}{1+2d}$ on the number of advice bits necessary in $\mathbb{R}^d$ and $\mathbb{Z}^d$, respectively, to achieve a competitive ratio of 1 are presented. Note that the latter lower bound is valid for both online models in $\mathbb{Z}^d$, but the theorem is formulated for online unit clustering only.

Section~\ref{upper} contains three algorithms with competitive ratio 1 for unit clustering reading $nd$, $\lfloor n/2\rfloor$, and $\lfloor (d+1)n/2\rfloor$ bits of advice in $\mathbb{R}^d$, $\mathbb{Z}$, and $\mathbb{Z}^d$, respectively. The last upper bound holds for both online models in $\mathbb{Z}^d$.

\section{Lower bounds}\label{lower}

\begin{definition}
	An online algorithm $\A$ needs different advice words for inputs $I_1$ and $I_2$ to serve them optimally if whenever $\A$ serves $I_1$ optimally reading $\underline{w}_1$ and $\A$ serves $I_2$ optimally reading $\underline{w}_2$, none of $\underline{w}_1$ and $\underline{w}_2$ is a prefix of the other (in particular, they are not the same and neither is empty). 
\end{definition}

\begin{proposition}\label{differ}
	If for input sequences $I_1$ and $I_2$ there exists $k>1$ such that $I_1^{(k)}=I_2^{(k)}=I'$, where $I_j^{(k)}$ denotes the subsequence consisting of the first $k$ elements of $I_j$, furthermore $I'$ must have different clusterings in the optimal service of $I_1$ and $I_2$, then $\A$ needs different advice words for inputs $I_1$ and $I_2$ to serve them optimally.
\end{proposition}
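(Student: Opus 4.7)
The plan is a proof by contradiction. Suppose on the contrary that $\A$ serves $I_1$ optimally reading some word $\underline{w}_1$ and serves $I_2$ optimally reading some word $\underline{w}_2$ such that one of them is a prefix of the other; without loss of generality assume $\underline{w}_1$ is a prefix of $\underline{w}_2$ (this covers the subcases $\underline{w}_1=\underline{w}_2$ and $\underline{w}_1$ empty). I will derive that $\A$ produces identical clusterings of the common prefix $I'$ in both runs, contradicting the hypothesis.

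The main step is an induction on $i\in\{0,1,\ldots,k\}$ establishing the invariant: after the $i$-th point has been served, the two runs have consumed the same initial segment of the advice tape and have assigned the first $i$ points of $I'$ to identical clusters. The base case $i=0$ is vacuous. For the induction step, the $(i+1)$-st input point is the same in both runs because it lies in $I'=I_1^{(k)}=I_2^{(k)}$, and the algorithm's internal state (the history of points, their cluster assignments, and the sequence of advice bits read so far) coincides in the two runs by the inductive hypothesis. A deterministic online algorithm with advice is a function of these data, so it reads the same number of additional tape bits and assigns the new point to the same cluster as in the other run.

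The only thing to verify is that these additional bits really are available and equal on both tapes. On input $I_1$ the total number of bits consumed is $|\underline{w}_1|$, so during the processing of the first $k\le |I_1|$ points only some prefix of $\underline{w}_1$ is read; since $\underline{w}_1$ is itself a prefix of $\underline{w}_2$, the bits read in the $(i+1)$-st step agree on both tapes. Applying the invariant at $i=k$ yields that the clustering of $I'$ is identical in both runs, which contradicts the assumption that $I'$ must be clustered differently in the optimal services of $I_1$ and $I_2$.

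I do not expect a serious obstacle: the argument is essentially the standard ``prefix-free'' observation underlying the tape model of advice complexity. The only point that deserves care is the bookkeeping that shows the algorithm never tries to read past $\underline{w}_1$ while processing the first $k$ points, which is handled by the total-length bound $|\underline{w}_1|$ on the $I_1$-run.
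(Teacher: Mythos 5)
Your proposal is correct and follows essentially the same route as the paper: negate the prefix-freeness, assume w.l.o.g.\ that $\underline{w}_1$ is a prefix of $\underline{w}_2$, and conclude that the deterministic algorithm behaves identically on the common prefix $I'$ in both runs, contradicting the requirement that $I'$ be clustered differently. The paper states the determinism step in one sentence where you spell it out as an induction, but the argument is the same.
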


\begin{proof}
	Suppose, to the contrary, that online algorithm $\A$ does not need different advice words for inputs $I_1$ and $I_2$. Suppose that $\A$ reads $\underline{w}_1$ while serving $I'$ if the entire input is $I_1$, and it reads $\underline{w}_2$ while serving $I'$ if the entire input is $I_2$. 
	We can assume without loss of generality that $|\underline{w}_1|\le|\underline{w}_2|$. If $\underline{w}_1$ is a prefix of $\underline{w}_2$, then the behavior of the algorithm while serving $I'$ and reading $\underline{w}_1$ is the same in both cases, so $\A$ cannot serve both $I_1$ and $I_2$ optimally.
\end{proof}

\begin{theorem}
There is no algorithm for online unit clustering in $\mathbb{R}^d$ that serves all request sequences of length at least $n$ optimally reading less than $d(n-2^{d+1}+1)$ bits of advice.
\end{theorem}

\begin{proof}
	At first, consider $d=1$. Fix $m\ge 0$, let $n=m+3$ and set $J=J_m=\{-1,1\}^{m}$. Construct an input sequence $\mathbf{r}=\mathbf{r_j}$ for each $\mathbf{j}=(j_1,\ldots,j_{m})\in J$ in the following way:
	\begin{eqnarray*}
		r_1 &=& 1,\\
		r_2 &=& 1/2,\\
		r_{i+2} &=& r_{i+1}+j_{i}/2^{i+1},\quad i=1,\ldots, m-1\\
		r_{m+2} &=& r_{m+1}-j_{m},\\
		r_{m+3} &=& r_{m+1}+j_{m}(1+1/2^{m}).
	\end{eqnarray*}
	Note that one of $r_{m+2}$ and $r_{n+3}$ is negative and the other is between 1 and 2, and their distance is $$|r_{m+3}-r_{m+2}|=|(2+1/2^{m})j_m|=2+1/2^{m}.$$
	The optimal solution is unique and has two clusters for each request sequence $\mathbf{r_j}$. If $j_{m}=1$ then the clusters covered by intervals of length 1 are 
	$$[r_{n-1}, r_{n-2}] \quad \textrm{and}\quad [r_{n-2}+1/2^{n-3}, r_{n}],$$ 
	otherwise the clusters covered by intervals of length 1 are
	$$[r_{n}, r_{n-2}-1/2^{n-3}]\quad \textrm{and}\quad [r_{n-2}, r_{n-1}],$$
	since the distance between any two requests is at least $1/2^{n-3}$, and $r_2,\ldots, r_{n-2}\in (0,1)$. Note that the first interval contains 0 and the second interval contains 1.
	
	Note that if $j_i=1$ then $r_{i+1}$ is in the cluster covered by the first interval because 
	if $r_{i+2}>r_{i+1}$ for some $1\le i <n-4$ then $r_{i'}-1/2^{n-3}\ge r_{i+1}$ for all $i+2\le i'\le n-2$ and by definition of $r_{n-1}$ and $r_n$, moreover, 
	if $j_i=-1$, then $r_{i+1}$ is in the cluster covered by the second interval because 
	if $r_{i+2}<r_{i+1}$ for some $1\le i <n-4$, then $r_{i'}+1/2^{n-3}\le r_{i+1}$ for all $i+2\le i'\le n-2$ and by definition of $r_{n-1}$ and $r_n$.

	There are $2^{n-3}$ input sequences, so if we prove that any algorithm needs different advice words for each pair of inputs to produce an optimal clustering, then we are done.

	Consider request sequences $\mathbf{r}_{\mathbf{j}}$ and $\mathbf{r}_{\mathbf{j}'}$. Suppose that $j_i=-1$, $j'_i=1$ and $j_{k}=j'_{k}$ for all $k<i$. Therefore, the first $i+1$ requests of the two inputs are identical. Furthermore, the $(i+1)$th request of $\mathbf{r}_{\mathbf{j}}$ and 1 are in the same cluster of the optimal clustering, but the $(i+1)$th request of $\mathbf{r}_{\mathbf{j}'}$ and 1 are in different clusters of the optimal clustering. Thus, any algorithm needs different advice words to distinguish the two input sequences before the $(i+2)$th request.
	
	\vspace{10pt}
	
	Now set $d>1$, $m>0$ integers and $n=m+2^{d+1}-1$. 
	For each $(\mathbf{j}_1,\ldots,\mathbf{j}_m)\in J_d^m$, define an input sequence in the following way: 
	\begin{eqnarray*}
	\{\mathbf{r}_1,\ldots,\mathbf{r}_{2^d-1}\} &=& \{0,1\}^d\setminus\{\mathbf{0}\},\\
	\mathbf{r}_{2^d} &=& \{1/2\}^d,\\ 
	\mathbf{r}_{2^d+i} &=& \mathbf{r}_{2^d+i-1}+\mathbf{j}_i/2^{i+1},\quad i=1,\ldots, m-1\\
	\{\mathbf{r}_{2^d+m},\ldots, \mathbf{r}_{2^d+m+2^d-1}\}&=& \mathbf{r}_{2^d+m-1}+\prod_{i=1}^d\{-j_{m,i},j_{m,i}(1+1/2^{m})\}
	\end{eqnarray*}
	where the last product is the $d$-ary Cartesian product. The first $2^d-1$ requests can be revealed in an arbitrary fixed order, and similarly, the last $2^d$ requests may arrive in an arbitrary order.
	
	Observations:
	\begin{itemize}
		\item If we fix a coordinate $\ell$ and consider the $\ell$th coordinate of the requests of a request sequence, deleting the zeros and the identical requests, then we essentially get a request sequence from the one-dimensional case. Note that all request sequences can be obtained in this way.
		\item $\mathbf{r}_{2^d+i}\in (0,1)^d$ for all $i=0,\ldots,m-1$.
		\item Every coordinate of $\mathbf{r}_{2^d+m+i}$ is in $(-1,0)\cup(1,2)$ for all $i=0,\ldots,2d-1$ and there is exactly one of them with all negative coordinates, denote it by $\mathbf{r}^*$.
		\item The distance between any two requests among the last $2^d$ is exactly $2+1/2^m$, therefore any two of them have to be in different clusters.
		\item The distance between any two requests is at least $1/2^m$.
	\end{itemize}
	The only optimal solution has $2^d$ clusters, they can be covered by unit balls
	$$\mathbf{r}^*+(1+1/2^{n-3})\mathbf{v}+[0,1]^d, \quad \mathbf{v}\in\{0,1\}^d.$$
	
	Note that any two of the first $2^{d-1}$ requests are in different clusters of the optimal clustering, and none of them is in the one covered by $\mathbf{r}^*+[0,1]^d$. Consider the one-dimensional request sequences consisting of the $\ell$th coordinates of $\mathbf{r}$. Then the optimal clustering of them consists of two clusters that can be covered by intervals $[r_\ell^*,r_\ell^*+1]$ and $[r_\ell^*+1+1/2^m, r_\ell^*+2+1/2^m]$.

	There are $2^{dm}$ input sequences where $m=n-2^{d+1}+1$, so if we prove that any algorithm needs different advice words for each pair of inputs to produce an optimal clustering, then we are done.
	
	Consider request sequences $\mathbf{r}_{\mathbf{j}}$ and $\mathbf{r}_{\mathbf{j}'}$. 
	Suppose that $\mathbf{j}_{k}=\mathbf{j}'_{k}$ for all $k<i$ and $\mathbf{j}_{i}\ne\mathbf{j}'_{i}$.
	This means that the first $2^d+i-1$ requests of $\mathbf{r}_{\mathbf{j}}$ and $\mathbf{r}_{\mathbf{j}'}$ are identical and the $(2^d+i)$th requests are different.
	We can assume that the $\ell$th coordinate of ${\mathbf{j}}_i$ is $-1$ and the $\ell$th coordinate of ${\mathbf{j}'_i}$ is 1. Consider the one-dimensional request sequences consisting of the $\ell$th coordinates of $\mathbf{r}_{\mathbf{j}}$ and $\mathbf{r}_{\mathbf{j}'}$. As we saw in case $d=1$, the clusters of the first $2^d+i-1$ requests of these sequences are not the same, therefore the clusters of the first $2^d+i-1$ requests of $\mathbf{r}_{\mathbf{j}}$ and $\mathbf{r}_{\mathbf{j}'}$ in the optimal clusterings cannot be the same, therefore any algorithm needs different advice words to distinguish the two input sequences before the $(2^d+i)$th request.
	\end{proof}
	Note that there is no algorithm for online unit covering in $\mathbb{R}^d$ that serves all request sequences of length $n\ge 3$ optimally reading a finite number of advice bits. Anyone can check this by looking at the input set $\{(0, r, r-1): r\in(0,1)\}$. 

	\vspace{10pt}
	\noindent\textbf{Remark} Note that using Mikkelsen's technique described in \cite{Mik} on lower bounds presented by Epstein and van Stee in \cite{ES} and by Dumitrescu and T\'oth in \cite{DT}, one can prove that 
	an algorithm with $o(n)$ bits of advice for unit clustering in $\mathbb{R}^2$ must be at least 11/6-competitive, and
	an algorithm with $o(n)$ bits of advice for unit clustering in $\mathbb{R}^d$ must be $\Omega(d)$-competitive.

\begin{theorem}\label{intlower}
	There is no algorithm for online unit clustering in $\mathbb{Z}^d$ that serves all request sequences of length $n\ge n_0$ for some $n_0$ optimally reading at most $\frac{d}{1+2d}\cdot n$ bits of advice.
\end{theorem}

\begin{proof}
	Set $m>0$ integer. First of all, consider case $d=1$.
	It is easy to see that any algorithm needs different advice words for input sequences  
	$I_0=(5, 6)$ and $I_1=(5, 6, 4, 7)$,
	because its behavior on the second request point must be different. 
	
	Suppose that $d\ge 1$ integer.
	For each $\underline{j}=(j_1,\ldots,j_{md})\in\{0,1\}^{md}$ we define a request sequence $I_{\underline{j}}$. Denote the $i$th unit vector of $\mathbb{Z}^d$ by $\mathbf{e}_{i}$. 
	
	The first part of each sequence is $I'$, which consists of the elements of $$ \{5i\mathbf{e}_1,\ \  5i\mathbf{e}_1+\mathbf{e}_\ell :1\le i\le m,\ \  1\le\ell\le d\}$$
	in an arbitrary fixed order.
	
	The second part of $I_{\underline{j}}$ consists of the elements of 
	$$\{5i\mathbf{e}_1-\mathbf{e}_\ell,\ \ 5i\mathbf{e}_1+2\mathbf{e}_\ell: j_k=1,\ k=(i-1)d+\ell,\ 1\le i\le m,\ 1\le\ell\le d\} $$ 
	in an arbitrary order.
	Now consider the case $m=1$. 
	If we consider a projection of a request sequence to an arbitrary dimension, then the image is essentially $I_0$ or $I_1$. 
	Therefore, if algorithm $\A$ serves all of them optimally, then it needs different advice words for each pair of them by Proposition~\ref{differ}. 

	If $m>1$, then the point set of an input can be partitioned into blocks such that the distance between any two blocks is greater than 1, and the request set of each block is essentially the set of points of one of the request sequences of case $m=1$. Therefore, if algorithm $\A$ serves all of them optimally, then it needs different advice words for each pair of them by Proposition~\ref{differ}.  
			
	There are $2^{md}$ request sequences such that any algorithm needs different advice words for any two of them. Recall that none of these words can be a prefix of another.
	Each request sequence consists of at least $(1+d)m$ and at most $(1+3d)m$ request points.
	Let $n_0\ge 0$ be an arbitrary constant. Suppose, to the contrary, that algorithm $\A$ serves all request sequences of length $n\ge n_0$ optimally reading at most $\frac{d}{1+2d}\cdot n$ bits of advice. Set $m\ge\lceil\frac{n_0}{d+1}\rceil$ such that $1+2d \mid m$. Denote the advice word read by $\A$ to serve $I_{\underline{j}}$ optimally by $\underline{w}_{\underline{j}}$. Let
	$$W=\left\{\underline{w}:|\underline{w}|=\frac{d(1+3d)m}{1+2d},\ \underline{w} \textrm{ is a prefix of }\underline{w}_{\underline{j}},\ \underline{j}\in\{0,1\}^{md}\right\}.$$
	The number of request sequences containing $(1+3d)m-2k$ points is $\binom{md}{k}$, moreover, if $I_{\underline{j}}$ consists of $(1+3d)m-2k$ request points, then $$|w_{\underline{j}}|\le \frac{d}{1+2d}((1+3d)m-2k),$$
	so $w_{\underline{j}}$ is a prefix of at least $2^{\frac{2kd}{1+2d}}$ elements of $W$, and if $\underline{j}\ne\underline{j}'\in\{0,1\}^{md}$ then $w_{\underline{j}'}$ is not a prefix of any of them.
	Therefore $$|W|\ge\sum_{k=0}^{md}\binom{md}{k}2^{\frac{2kd}{1+2d}}=\left(2^{\frac{2d}{1+2d}}+1\right)^{md}>2^{md\cdot\frac{1+3d}{1+2d}},$$
	but this is a contradiction.
\end{proof}
Note that this result is not sharp, one can get a higher bound on the advice complexity by more careful calculation.

	\vspace{10pt}
\noindent\textbf{Remark} Chan and Zarrabi-Zadeh in \cite{CZZ} gave a lower bound of 4/3 on the competitive ratio of any randomized algorithm for unit clustering in $\mathbb{R}$. By slightly modifying their construction (e.g. using input sequences $(5,6)$ and $(5,6,4,7)$), it is easy to see that this bound also holds in $\mathbb{Z}$. 
Thus, using Mikkelsen's technique (see in \cite{Mik}) on this lower bound, one can prove that an algorithm with $o(n)$ bits of advice for unit clustering in $\mathbb{Z}$ must be at least 4/3-competitive.

\vspace{10pt}

\section{Upper bounds}\label{upper}

We have an almost tight upper bound in $\mathbb{R}$.

\begin{theorem}\label{realupper}
	There is an online algorithm $\A_{\mathbb{R}}$ for unit clustering in $\mathbb{R}$ that reads at most $n$ bits of advice to cluster each request sequence consisting of $n$ points optimally.
\end{theorem}

\begin{proof}
	Chan and Zarrabi-Zadeh \cite{CZZ} noted that it is trivial to find an optimal solution for a given input offline, i.e. starting from the left, repeatedly define a cluster of length 1 with the leftmost unserved point as its left endpoint. Call this algorithm OPT. Observe that the clusters constructed by OPT do not intersect, and that each contains one or two integers. Let the maximum integer in cluster $C$ be the representation point of $C$. The representation point of a cluster is not necessarily a request. Clearly, different clusters have different representation points. Note that the representation point of the cluster of a request $r$ is $\lfloor r\rfloor$ or $\lfloor r\rfloor+1$. 

	\noindent Now we can describe algorithm $\A_{\mathbb{R}}$:\\  
	Assign a cluster $C_z$ to all $z\in\mathbb{Z}$ and open it (for cost 1) when the first request is assigned to it by $\A_{\mathbb{R}}$.\\
	When request point $r$ arrives, then $\A_{\mathbb{R}}$ reads an advice bit $b\in\{0,1\}$ if the representation point of the cluster created by OPT containing $r$ is $\lfloor r\rfloor+b$. 
	Assign $r$ to $C_{\lfloor r\rfloor +b}$. 
	
	The clustering produced by $\A_{\mathbb{R}}$ is optimal because it assigns a request $r$ to $C_z$ if and only if the representation point of the cluster to which $r$ is assigned by OPT is $z$.
\end{proof}

\begin{theorem}\label{realdupper}
	There is an online algorithm $\A_{\mathbb{R}^d}$ for unit clustering in $\mathbb{R}^d$ which reads at most $2dn$ bits of advice to cluster optimally each request sequence consisting of $n$ points. 
\end{theorem}

\begin{proof}
	Consider the optimal clustering algorithm OPT. We can assume that the clusters created by OPT are unit balls, i.e. a cluster is $$C=[q_1,q_1+1]\times\ldots\times[q_d,q_d+1]$$ for some $(q_1,\ldots,q_d)\in\mathbb{R}^d$. Each cluster created by OPT has a (not necessarily requested) point with integer coordinates in it. Choose the one with maximal coordinates to be the representation point of the cluster. 

	Point $\mathbf{z}\in\mathbb{Z}^d$ can be the representation point of at most $2^d$ clusters constructed by OPT. Suppose, to the contrary, that $\mathbf{z}$ is in more than $2^d$ clusters. Then all these clusters are subsets of $\mathbf{z}+[-1,1]^d$, but it can be covered by $2^d$ clusters, and this is a contradiction.

	For each $\mathbf{z}\in\mathbb{Z}^d$, if it is a representation point of a cluster constructed by OPT, then consider an order of the set $\mathcal{C}_\mathbf{z}=\{C: \mathbf{z} \textrm{ is the reference point of } C\}$ and let label $l_C$ be the sequence number of $C$ in $\mathcal{C}_\mathbf{z}$ according to this order. Each cluster can be identified by its representation point and label in this way. 
	
	Now we can describe $\A_{\mathbb{R}^d}$:\\
	When request $\mathbf{r} = (r_1,\ldots,r_d)$ arrives $\A_d$ reads bits $b_1,\ldots,b_d$ and bits $b_1',\ldots,b_d'$. Advice bit $b_i$ is 1 if and only if $z_i> \lfloor r_i\rfloor$ where $\mathbf{z}=(z_1,\ldots,z_d)$ is the representation point of the cluster $C$ created by OPT containing $\mathbf{r}$ and the word $b_1'\ldots b_d'$ encodes the label $l_C$. $\A^{(d)}$ assigns $\mathbf{r}$ to cluster $C$ identified by representation point $(\lfloor r_1\rfloor +b_1,\ldots,\lfloor r_d\rfloor +b_d)$ and label $l_C$.
	
	Algorithm $\A_{\mathbb{R}^d}$ assigns each request to the same cluster as OPT, therefore, it produces an optimal clustering.
\end{proof}

Better results can be achieved in $\mathbb{Z}^d$. At first, $d=1$ is considered.

\begin{theorem}\label{intupper}
	There is an online algorithm $\A_{\mathbb{Z}}$ for unit clustering in $\mathbb{Z}$ that reads at most $\lfloor n/2\rfloor$ bits of advice to cluster each request sequence consisting of $n$ points optimally. 
\end{theorem}

\begin{proof}
Consider optimal algorithm OPT described in the proof of Theorem~\ref{realupper}. First, let's make some observations. Obviously, each cluster contains one or two different request points in the optimal clustering. If the cluster of $p$ created by OPT does not contain another request point then $p+1$ is not in the request sequence. Therefore if $p$ is the minimal requested point in a cluster then that cluster is $[p,p+1]$, and these intervals do not intersect.

A step of algorithm $\A_{\mathbb{Z}}$ to serve request $p$ is the following. Note that each requested point is unmarked when it arrives. 
\begin{itemize}
	\item If $p$ is in an interval that covers a cluster, then assign $p$ to that cluster;
	\item otherwise, if $p-1$ is in an interval that covers a cluster and $p+1$ is a requested point, then assign request $p$ to the cluster of $p+1$, and cover that cluster with interval $[p, p+1]$;
	\item otherwise, if $p-1$ is in an interval covering a cluster, $p+1$ is not a requested point, then open a new cluster, assign request $p$ to it, and cover that cluster with interval $[p, p+1]$;
	\item otherwise, if $p+1$ is in an interval covering a cluster and $p-1$ is a requested point, then assign request $p$ to cluster of $p-1$, and cover that cluster with interval $[p-1, p]$;
	\item otherwise, if $p+1$ is in an interval covering a cluster and $p-1$ is not a requested point, then open a new cluster, assign request $p$ to it, and cover that cluster with interval $[p-1, p]$;
	\item otherwise, if both $p-1$ and $p+1$ are requested points, then mark $p-1$, $p$ and $p+1$, and read a bit of advice,
	\begin{itemize}
		\item if it is 0, then assign $p$ to the cluster of $p+1$, cover this cluster with interval $[p,p+1]$ and cover the cluster of $p-1$ with interval $[p-2,p-1]$,
		\item if it is 1, then assign $p$ to the cluster of $p-1$, cover this cluster with interval $[p-1,p]$ and cover the cluster of $p+1$ with interval $[p+1, p+2]$;
	\end{itemize}
	\item otherwise, if $p-1$ is a requested point, then mark $p-1$ and $p$, and read a bit of advice,
	\begin{itemize}
		\item if it is 1, then assign $p$ to cluster of $p-1$, cover that cluster with interval $[p-1,p]$;
		\item if it is 0, then open a new cluster, assign $p$ to it and cover this cluster with interval $[p,p+1]$, and cover the cluster of $p-1$ with interval $[p-2,p-1]$;
	\end{itemize}
	\item otherwise, if $p+1$ is a requested point, then mark $p$ and $p+1$, and read a bit of advice,
	\begin{itemize}
		\item if it is 0, then assign $p$ to cluster of $p+1$, cover that cluster with interval $[p,p+1]$;
		\item if it is 1, then open a new cluster, assign $p$ to it, and cover this cluster with interval $[p-1,p]$, and cover the cluster of $p+1$ with interval $[p+1,p+2]$;
	\end{itemize}
	\item otherwise open a new cluster and assign $p$ to it.
\end{itemize}

Note that if a cluster created by $\A_{\mathbb{Z}}$ is not covered by an interval, then it has one requested point, moreover, neither $p-1$ nor $p+1$ is requested.
This fact and the observations above imply that algorithm $\A_{\mathbb{Z}}$ finds the clustering of OPT because the algorithm covers a cluster with an interval when it reads a bit of advice that says a cluster of OPT or there is an interval covering a closely adjacent cluster.

Moreover, when $\A_{\mathbb{Z}}$ reads a bit of advice (which is 1 if and only if $[p-1,p]$ is a cluster of OPT), it marks at least two requested points that are not in an interval covering a cluster, then covers their clusters with intervals, so it reads at most $\lfloor n/2\rfloor$ bits of advice.
\end{proof}

The following algorithm needs more information to identify the first request (i.e. when to open a new cluster) and the location of each cluster (i.e. to recognize subsequent points in the cluster).

\begin{theorem}\label{gridupper}
	Fix $d>0$ integer.
	There is an online algorithm $\A_{\mathbb{Z}^d}$ for unit covering in $\mathbb{Z}^d$ that reads at most $\lfloor (d+1)n/2\rfloor$ bits of advice to cluster each request sequence consisting of $n$ points optimally. 
\end{theorem}

\begin{proof}
	A step of algorithm $\A_{\mathbb{Z}^d}$ to serve request $\mathbf{p}=(p_1,\ldots,p_d)$ is the following. 
	\begin{itemize}
		\item If $\mathbf{p}$ is in a ball that covers a cluster, then assign $\mathbf{p}$ to that cluster;
		\item otherwise open a new cluster, assign $\mathbf{p}$ to it, and read an advice bit. If the advice bit is 0, then read more $d$ bits of advice: $b_1,\ldots, b_d$ and cover the new cluster with ball $\mathbf{p}+[-b_1, 1-b_1]\times\ldots\times[-b_d,1-b_d]$. Otherwise cover the new cluster with ball $\mathbf{p}+[0,1]^d$.
	\end{itemize}
	The meaning of the first advice bit in a step: it is 1 if and only if the requested point is the only point in its cluster (and any ball containing $\mathbf{p}$ is suitable). It is easy to see that the advice word can encode an optimal clustering. Let $c_1$ be the number of clusters containing one requested point and $c_0$ the number of clusters containing at least two requested points. The number of bits read by $\A_{\mathbb{Z}^d}$ is 
	$$c_1 +(d+1)c_0\le \frac{d+1}{2}\cdot (c_1 + 2c_0)\le \frac{d+1}{2}\cdot n.$$
\end{proof}

\section{Acknowledgements}

The author was supported by J\'anos Bolyai Research Scholarship of the Hungarian Academy of Sciences.

This study was supported by the project TKP2021-NVA-09. Project no TKP2021-NVA-09 has been implemented with the support provided  by the Ministry of Culture and Innovation of Hungary from the National Research, Development and Innovation Fund, financed under the TKP2021-NVA funding scheme.

\end{document}